\def\BibTeX{{\rm B\kern-.05em{\sc i\kern-.025em b}\kern-.08em
    T\kern-.1667em\lower.7ex\hbox{E}\kern-.125emX}}
\newcommand{\ha}{{\mathcal{H}_\mathrm{1}}}
\newcommand{\hb}{{\mathcal{H}_\mathrm{2}}}
\newcommand{\rd}{\signal{\rho}_\mathrm{d}}
\newcommand{\gn}{\signal{g}_{\mathrm{d},n}}
\newcommand{\vd}{V_\mathrm{d}}
\newcommand{\kd}{\mathcal{K}_\mathrm{d}}
\newcommand{\hmr}{\mathcal{H}_{\signal{M}}}
\newcommand{\hmra}{\mathcal{H}_{\signal{M}_\alpha}}
\newcommand{\hsignal}[1]{\bar{\signal{#1}}}
\DeclareMathOperator*\argmin{arg \, min}		% arg min
\newtheorem{proposition}{Proposition}
\theoremstyle{newremark}
\theoremstyle{newremark}
\theoremstyle{newremark}
\theoremstyle{definition}
\newcommand{\signal}[1]{{\boldsymbol{#1}}}
\newcommand{\real}{{\mathbb R}}
\newcommand{\complex}{{\mathbb C}}
\newcommand{\innerprod}[2]{\left\langle{#1},{#2}\right\rangle}
\newtheorem{remark}{Remark}
\newtheorem{assumption}{Assumption}
\newcommand{\Natural}{{\mathbb N}}
\renewcommand{\refeq}[1]{(\ref{#1})}
\newcommand{\prox}[1]{{\mathrm{prox}_{#1}}}
\newcommand{\dprime}{{\prime\prime}}
\begin{document}
\title{Hybrid data and model driven algorithms for angular power spectrum estimation}

\author{\IEEEauthorblockN{Renato L. G. Cavalcante and Slawomir Sta\'nczak}
\IEEEauthorblockA{\textit{Fraunhofer Heinrich Hertz Institute and Technical University of Berlin, Germany}\\
\{renato.cavalcante, slawomir.stanczak\}@hhi.fraunhofer.de}
}

\maketitle

\begin{abstract}
	We propose two algorithms that use both models and datasets to estimate angular power spectra from channel covariance matrices in massive MIMO systems. The first algorithm is an iterative fixed-point  method that solves a hierarchical problem. It uses model knowledge to narrow down candidate angular power spectra to a set that is consistent with a measured covariance matrix. Then, from this set, the algorithm selects the angular power spectrum with minimum distance to its expected value with respect to a Hilbertian metric learned from data. The second algorithm solves an alternative optimization problem with a single application of a solver for nonnegative least squares programs. By fusing information obtained from datasets and models, both algorithms can outperform existing approaches based on models, and they are also robust against environmental changes and small datasets.
\end{abstract}

\begin{IEEEkeywords}
massive MIMO, hierarchical optimization, machine learning, angular power spectrum
\end{IEEEkeywords}

\section{Introduction}
Estimating the angular power spectrum (APS) of a signal impinging on an antenna array from the measured channel covariance matrix is an ill-posed problem with important applications in massive MIMO systems, including pilot decontamination \cite{renato2020}, channel covariance matrix estimation in frequency division duplex (FDD) systems \cite{miretti18,miretti18SPAWC,renato18error,dec2015,dec2016,khalil18}, and localization \cite{dec18}, among others. Current approaches for APS estimation can be divided into two main groups: model based methods \cite{miretti18,miretti18SPAWC,renato18error,khalil18} and data driven methods \cite{song2019}.

Model-based methods are able to produce reliable estimates with little side information, no training, and potentially low computational complexity \cite{miretti18,miretti18SPAWC,renato18error,renato2020}. However, they do not exploit any information from datasets to improve the estimates or to gain robustness against measurement errors or model uncertainty, or both. In contrast, pure data-driven methods can provide good performance without any knowledge about physical models, but their robustness against changes in the propagation environment (i.e., the distribution of the APS) is not acceptable for many applications. Furthermore, even if the environment does not change, in general these methods are heuristics that do not provide any guarantees that the APS estimates are consistent with measured covariance matrices. In other words, using an APS estimate in the forward problem that computes the covariance matrix from the APS may not reproduce the measured covariance matrix accurately, and we note that this type of consistency is important to bound errors in some applications, such as the error of channel covariance matrix conversion in FDD massive MIMO systems  \cite{renato18error,hag2018multi}.

Against this background, we propose algorithms that use datasets to improve the estimates obtained with model-based methods, without unduly losing robustness against environmental changes. To this end, we start by revisiting existing algorithms for APS estimation to establish their equivalence and to understand their limitations. In particular, using common assumptions in the literature, we prove that some of these algorithms solve equivalent optimization problems (Proposition~\ref{prop.l1}), in the sense that the set of solutions is the same. However, this set is not a singleton in general, so the performance of these existing algorithms can differ significantly because they may converge to different solutions. Nevertheless, we show in this study that nonuniqueness of the solution can be exploited with the paradigm of hierarchical optimization \cite{yamada11,yamada19} to improve the quality of the estimates. More precisely, from the set of solutions to the existing problem formulations, we select an estimate that least deviates from the expected value with respect to a Hilbertian metric learned from datasets; namely, the Mahalanobis distance. The unique solution to the resulting problem is then reinterpreted as a projection onto the set of fixed points of a proximal mapping, and it is computed via Haugazeau's algorithm \cite[Ch.~30]{baus11}. As an alternative to this iterative method, we also pose an optimization problem that can be solved with a single application of a solver for nonnegative least squares problems. Simulations show that the proposed techniques outperform previous algorithms in some scenarios, and they can be made robust against changes of the distribution of the APS, which is one of the major limitations of data driven methods, and, in particular, neural networks.

\section{Preliminaries}
\label{sec:math}

Hereafter, by $(\cdot)^t$, $(\cdot)^H$, and $(\cdot)^\dagger$ we denote, respectively, the transpose, the Hermitian transpose, and the pseudo-inverse. The set of nonnegative reals is $\real_+$. The real and imaginary components of a complex matrix $\signal{M}\in\mathbb{C}^{N\times N}$ are given by, respectively, $\mathrm{Re}(\signal{M})\in\real^{N\times N}$ and $\mathrm{Im}(\signal{M})\in\real^{N\times N}$.

By $(\mathcal{H},\innerprod{\cdot}{\cdot}_\mathcal{H})$ we denote a real Hilbert space with the inner product $\innerprod{\cdot}{\cdot}_\mathcal{H}$ and induced norm $\|x\|_\mathcal{H}:=\sqrt{\innerprod{x}{x}}_\mathcal{H}$. The set of lower semicontinuous convex functions $f:\mathcal{H}\to\real\cup\{\infty\}$ is given by $\Gamma_0(\mathcal{H})$. The proximal mapping \linebreak[4] $\mathrm{prox}_{ f}:\mathcal{H}\to\mathcal{H}$ of  $f\in \Gamma_0(\mathcal{H})$ maps $\signal{x}\in\mathcal{H}$ to the unique solution to: $\mathrm{Minimize}_{\signal{y}\in\mathcal{H}} f(\signal{y})+(1/2)\|\signal{x}-\signal{y}\|^2_\mathcal{H} $. A function $f:\mathcal{H}\to\real\cup\{\infty\}$ is said to be coercive if $\|\signal{x}\|_\mathcal{H}\to\infty$ implies $f(\signal{x})\to\infty$. The projection $P_C:\mathcal{H}\to C$ onto a nonempty closed convex set $C\subset\mathcal{H}$ maps $\signal{x}\in\mathcal{H}$ to the unique solution to: $\mathrm{Minimize}_{\signal{y}\in C}\|\signal{x}-\signal{y}\|_\mathcal{H}$. The indicator of a set $C\subset\mathcal{H}$ is the function ${\iota}_C:\mathcal{H}\to\{0,\infty\}$ given by $\iota_C(\signal{x}) = 0$ if $\signal{x}\in C$ or $\iota_C(\signal{x}) = \infty$ otherwise. The norms $\|\cdot\|_1$ and $\|\cdot\|_2$ are, respectively, the standard $l_1$ and $l_2$ norms in Euclidean spaces. The set of fixed points of a mapping $T:\mathcal{H}\to\mathcal{H}$ is denoted by $\mathrm{Fix}(T):=\{x\in\mathcal{H}~|~T(x)=x \}$. Given two real Hilbert spaces $(\mathcal{H}^\prime, \innerprod{\cdot}{\cdot}_{\mathcal{H}^\prime})$ and $(\mathcal{H}^\dprime,\innerprod{\cdot}{\cdot}_{\mathcal{H}^\dprime})$, the set $\mathcal{B}(\mathcal{H}^\prime, \mathcal{H}^\dprime)$ is the set of bounded linear operators mapping vectors in $\mathcal{H}^\prime$ to vectors in $\mathcal{H}^\dprime$.

\section{System Model}
\label{sect:model}

We consider the uplink of a system with one single-antenna user and one base station equipped with $N\in\Natural$ antennas. At time $k\in\Natural$, the signal received at the base station spaced by multiples of the coherence interval $T_c$ in a memoryless flat fading channel is given by 
$\signal{y}[k] = \signal{h}[k]~s[k] + \signal{n}[k] \in \mathbb{C}^N,$
where $s[k]\in\mathbb{C}$ and $\signal{h}[k]\in\mathbb{C}^N$ denote, respectively, the transmitted symbol and the channel of the user; and $\signal{n}[k]\in\mathbb{C}^N$ is a sample from the distribution $\mathcal{N}_\mathbb{C}(\signal{0},\sigma^2\signal{I})$. As common in the literature \cite{ad2013,dec2015}, we assume that $E[|s[k]|^2] = 1$ and $E[s[k]]=0$ for every $k\in\Natural$.~\footnote{We use the same notation for random variables and their samples. The meaning that should be applied is clear from the context.} Furthermore, the transmitted symbols and noise are mutually independent, and their distributions do not change with the index $k$ in a sufficiently large time window. Therefore, hereafter we assume that
\begin{align}
\label{eq.contamination}
(\forall k\in\Natural)~E[\signal{y}[k]\signal{y}[k]^H] = \signal{R} + \sigma^2 \signal{I},
\end{align}
where $\signal{R} = E[\signal{h}[k] \signal{h}[k]^H] = \signal{USU}^H \in\complex^{N\times N}$ is the channel covariance matrix, $\signal{U}\in \complex^{N\times N}$ is the unitary matrix of eigenvectors of $\signal{R}$, and $\signal{S}\in \complex^{N\times N}$ is the diagonal matrix of eigenvalues of $\signal{R}$. The channel sample $\signal{h}[k]$ at time $k\in\Natural$ takes the form $\signal{h}[k]=\signal{U} \signal{S}^{1/2}\signal{w}[k]$, where $(\signal{w}[k])_{k\in\Natural}\subset\mathbb{C}^N$ are samples of i.i.d. random vectors with distribution $\mathcal{N}_\mathbb{C}(\signal{0},\signal{I})$. Hereafter, since the distribution of the random variables do not change with the time index $k$ in the memoryless channel described above, we omit this index if confusion does not arise. 

\section{The estimation problem}
\label{sect.relations}

Let $\left(\ha, \innerprod{\cdot}{\cdot}_{\ha}\right)$ be the real Hilbert space of (equivalent classes of) real square integrable functions $\ha=L_2(\Omega)$ with respect to the standard Lebesgue measure $\mu$ on a nonnull measurable set $\Omega\subset \real^M$. In this Hilbert space, inner products are defined by $(\forall x\in\ha)(\forall y\in\ha)\innerprod{x}{y}_\ha=\int_{\Omega} x~y~ \mathrm{d}\mu$. Now, suppose that an array with $N\in\Natural$ antennas at a base station scans signals arriving from angles within a compact domain $\Omega\subset \real^M$, where each coordinate of $\Omega$ corresponds to azimuth or elevation angles, possibly by also considering different antenna polarizations \cite{miretti18SPAWC}. Given ${\theta}\in\Omega$, we denote by $\rho(\theta)$ the average angular power density impinging on the array from angle $\theta$, and we further assume that the function $\rho:\Omega\to\real$, hereafter called the \emph{angular power spectrum} (APS), is an element of $\ha$; i.e., $\rho\in\ha$.  Being a power spectrum, $\rho$ is also an element of the cone 
\begin{align}
\label{eq.cone}
\mathcal{K}:=\{\rho\in\ha~|~\mu(\{\theta\in\Omega~|~\rho(\theta)< 0\}) = 0 \}
\end{align}
of $\mu$-almost everywhere (a.e.) nonnegative functions.

As shown in \cite{miretti18,miretti18SPAWC,renato18error,renato2020}, a common feature of realistic massive MIMO models is that the stacked version  $$\signal{r} = [r_1,\ldots,r_{2N^2}]^t = \phi(\signal{R})$$ of the channel covariance matrix $\signal{R}$ in \refeq{eq.contamination} is related to the angular power spectrum $\rho$ by
\begin{align}
\label{eq.innerprod}
(\forall n\in\{1,\ldots,2N^2\})~r_n = \innerprod{\rho}{g_n}_\ha,
\end{align}
where  $(g_n)_{n\in\{1,\ldots,2N^2\}}$ are functions in $\ha$ defined by physical properties of the array and the propagation model, and $$\phi:\complex^{N\times N}\to\real^{2N^2}:\signal{R}\mapsto \mathrm{vec}\left(\left[\begin{matrix}\mathrm{Re}(\signal{R}) \\ \mathrm{Im}(\signal{R})\end{matrix}\right]\right)$$ is the bijective mapping that vectorizes the imaginary and real components of a matrix. Therefore, in light of \refeq{eq.innerprod}, if the Hilbert space $(\hb, \innerprod{\cdot}{\cdot}_{\hb})$ denotes the standard Euclidean space $\hb = \real^{2N^2}$ equipped with inner product $$(\forall \signal{y}\in\hb)(\forall \signal{x}\in\hb) \innerprod{\signal{x}}{\signal{y}}_{\hb}:=\signal{x}^t\signal{y},$$ then the relation between $\rho$ and $\signal{r}$  is given by $\signal{r}=T\rho$, where 
$T\in\mathcal{B}(\ha, \hb)$ is the operator  \cite{renato2020} 
\begin{align}
\label{eq.lin_op}
\begin{array}{rcl}
T:\ha&\to&\hb \\ 
\rho&\mapsto& [\innerprod{\rho}{g_1}_\ha,\ldots, \innerprod{\rho}{g_{2N^2}}_\ha]^t.
\end{array}
\end{align}

\begin{remark}
	Covariance matrices $\signal{R}$ have structure, so we can remove many redundant equations in \refeq{eq.innerprod} to reduce the dimensionality of the space $\hb$. %To avoid notational clutter, we do not consider this trivial change in this section.
	%Note that covariance matrices $\signal{R}$ of arrays following the model in Example~\ref{example.uniform} are Toeplitz and Hermitian, so, with the ordering in \refeq{eq.ordering} and \refeq{eq.g}, we only need $(r_1,\ldots, r_{2N})$ (or, equivalently, the first column of $\signal{R}$) to reconstruct $\signal{R}$. The practical implication of this fact is that estimation algorithms need to consider only 2N inner products in \refeq{eq.innerprod} to estimate $\rho$ given a Toeplitz and Hermitian covariance matrix $\signal{R}$.
\end{remark}

The objective of the algorithms we propose in this study is to \emph{estimate $\rho$ from a known (vectorized) channel covariance matrix $\signal{r}:=\phi(\signal{R})=T\rho$.} Note that the operator $T$ does not have an inverse in general, so this estimation problem is ill-posed. In particular, the null space $\mathcal{N}(T):=\{x\in\ha~|~ Tx = 0 \}$ of $T\in\mathcal{B}(\ha,\hb)$ is nontrivial (i.e., $\mathcal{N}(T)\neq \{0\}$), so there exist uncountably many functions $\rho$ in $\ha$ for which $T$ maps $\rho$ to the same vector $\signal{r}=\phi(\signal{R})$. Nevertheless, the studies in \cite{miretti18,miretti18SPAWC,renato18error,khalil18} have shown that good estimates of $\rho$ can be obtained with computationally efficient methods in practice. To improve upon these existing methods, we first need to understand their strengths and limitations, which is the topic of the next section. Before we proceed, we discretized all signals and operators to avoid unnecessary technical digressions. However, we emphasize that the results in this study can be straightforwardly extended to the infinite dimensional case described above with the tools in \cite{miretti18,miretti18SPAWC,renato2020,renato18error}. 

To obtain a finite dimension approximation of the estimation problem, we denote by $\rd:=[\rho(\theta_1),\ldots,\rho(\theta_D)]^t \in \real^D$ the discrete version of true angular power spectrum $\rho\in\ha$, where $D$ is size of the discrete grid.\footnote{This approximation is somewhat heuristic because $\ha$ is an equivalence class of functions. In particular, given $\theta\in\Omega$ and $\rho\in\ha$, the value $\rho(\theta)$ is not well defined.} As a result, the integrals in \refeq{eq.innerprod} can be approximated by $(\forall n\in\{1,\ldots,2N^2\})$
\begin{align*}
\innerprod{\rho}{g_n}_\ha = \int_{\Omega} \rho~g_n \mathrm{d}\mu \approx \rd^t \gn,
\end{align*}
where $\gn=(\mu(\Omega)/D) [g_n(\theta_1),\ldots,g_n(\theta_D)]^t\in\real^D$ is a discrete approximation of the function $g_n$ of array. In turn, with \begin{align}
\label{eq.mat_a}
\signal{A}:=[\signal{g}_{\mathrm{d},1}\ldots\signal{g}_{\mathrm{d},2N^2}]^t,
\end{align} 
the operator  
$T_\mathrm{d}:\real^D\to \real^{2N^2}:\signal{\rho}\mapsto \signal{A}\signal{\rho}$ is a discrete approximation of $T$ in \refeq{eq.lin_op}, 
and $\kd:=\real_+^D$ is a discrete approximation of $\mathcal{K}$.

\section{Existing solutions for angular power spectrum estimation}
\label{sect.relations}

For simplicity, in this section we use the following assumption, which is dropped later in Sect.~\ref{sect.proposed}. 

\begin{assumption}
	\label{assumption.range}
	The estimated covariance matrix $\signal{R}$, or, equivalently, $\signal{r}=\phi(\signal{R})$, is compatible with the array, in the sense that it can be generated with one function in $\kd$; i.e., $\phi(\signal{R})=\signal{r}\in T_\mathrm{d}(\kd):=\{\signal{A\rho} \in \real^{2N^2}~|~ \signal{\rho}\in\kd \}.$
\end{assumption} 

If Assumption~\ref{assumption.range} holds, we can estimate $\rd$ from $\signal{r}$ by solving the following set-theoretic estimation problem, which is a discrete version of one of the infinite dimensional problems posed in \cite{miretti18,miretti18SPAWC,renato18error}:
\begin{align}
\label{eq.feasibility}
\text{Find }\signal{\rho}\in\real^D \text{ such that } \signal{\rho}\in \vd\cap \kd,
\end{align}
where $\vd\subset \real^D$ is the linear variety $\vd := \{\signal{\rho} \in\real^D~|~\signal{A\rho}=\signal{r}\}$; i.e., the set of all (not necessarily nonnegative) vectors that produce the observed channel covariance matrix $\signal{R}=\phi^{-1}(\signal{r})$. The idea of Problem~\refeq{eq.feasibility} is to find an estimate that is consistent with all known information about $\rd$, or, more precisely, with the fact that $\rd$ is nonnegative (i.e., $\rd\in\kd$) and it produced the observed channel covariance matrix $\signal{R}=\phi^{-1}(\signal{r})$ (i.e., $\rd\in \vd$). 
 In this set-theoretic paradigm, any two estimates belonging to both $\vd$ and $\kd$ are equally good because no other information about $\rd$ is assumed to be available.
 
Clearly, a necessary and sufficient condition for the convex feasibility problem in \refeq{eq.feasibility} to have a solution is that Assumption~\ref{assumption.range} holds. Since the projections onto $\vd$ and $\kd$ are easy to compute in the Hilbert space $(\mathcal{H}_2,\innerprod{\cdot}{\cdot}_{\mathcal{H}_2})$ \cite[Ch.~3]{stark98}, a plethora of simple iterative projection-based algorithms with convergence guarantees are widely available \cite{stark98,baus16,censor11}. In particular, the variant of the Douglas-Rachford splitting method studied in \cite{baus16} converges in a finite number of iterations. We can also reformulate Problem \refeq{eq.feasibility} as a standard convex program to enable us to use traditional solvers. For example, consider the problem below, which has been proposed in \cite{khalil18}:
\begin{align}
\label{eq.nnls}
\text{Minimize}_{\signal{\rho}\in \kd} \|\signal{A\rho}-\signal{r}\|_2^2.
\end{align}

From the definition of the linear variety $\vd$, any estimate $\signal{\rho}\in \vd$ satisfies $\|\signal{A\rho}-\signal{r}\|_2^2=0$, which is the global minimum of the cost function in Problem \refeq{eq.nnls}. Therefore, under  Assumption~\ref{assumption.range}, we verify that $\signal{\rho}^\star$ solves Problem \refeq{eq.feasibility} if and only if $\signal{\rho}^\star$ solves Problem \refeq{eq.nnls}. We emphasize that Problems \refeq{eq.feasibility} and \refeq{eq.nnls} do not have a unique solution in general. As a result, the quality of the estimate of $\signal{\rho}_\mathrm{d}$ obtained by solving either \refeq{eq.feasibility} or \refeq{eq.nnls} depends on the choice of the iterative solver. 

Nonuniqueness of the solution provides us with additional possibilities to choose a vector in the solution set with additional desirable properties. For example, a common \emph{hypothesis} is that $\signal{\rho}_\mathrm{d}$ is a sparse vector, so, as an attempt to promote sparsity, we may select a solution to \refeq{eq.feasibility} with minimum $l_1$ norm (recall that the $l_1$ norm is known to promote sparsity). Formally, we solve the following problem:
\begin{align}
\label{eq.l1}
\text{Minimize}_{\signal{\rho}\in\real^D} \|\signal{\rho}\|_1 \text{ subject to } \signal{\rho}\in \vd\cap\kd.
\end{align}
However, as we argue below, for common array models in the literature, there is nothing to be gained by solving \refeq{eq.l1} instead of \refeq{eq.feasibility} or the equivalent problem in \refeq{eq.nnls} [if Assumption~\refeq{assumption.range} holds] because the set of solutions to Problems \refeq{eq.feasibility}, \refeq{eq.nnls}, and \refeq{eq.l1} are the same. Some of these arrays satisfy the following assumption:

\begin{assumption}
	\label{a.one}
	Let $S:=\{g_1, \ldots, g_{2N^2}\} \subset \ha$ be the set of functions of the array. We assume that the function $u:\Omega\to\real: \theta\mapsto 1$ is a member of $S$, in which case the vector $c \signal{1}$, where $\signal{1}\in\real^D$ is the vector of ones and $c:=\mu(\Omega)/D$, is a row of the matrix $\signal{A}$ in \refeq{eq.mat_a}.
\end{assumption}

\begin{remark}
	\label{remark.ula}
	Assumption~\ref{a.one} is valid for common array models with isotropic antennas, such as uniform linear arrays and planar arrays.
\end{remark}

The relation among Problems \refeq{eq.feasibility}, \refeq{eq.nnls}, and \refeq{eq.l1} is formally established in the next simple proposition.
\begin{proposition}
	\label{prop.l1}
	Let Assumptions~\ref{assumption.range} and \ref{a.one} be valid. Then set of solutions to Problems \refeq{eq.feasibility}, \refeq{eq.nnls}, and \refeq{eq.l1} are the same.
\end{proposition}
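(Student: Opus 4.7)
The plan is to exploit the fact that under Assumption~\ref{a.one}, one of the linear equations defining $\vd$ is essentially a constraint on the total mass $\signal{1}^t\signal{\rho}$, and that on $\kd$ the $\ell_1$ norm coincides with this total mass. This will collapse the $\ell_1$ objective in \refeq{eq.l1} to a constant over the feasible region, forcing the three solution sets to coincide.

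First, I would dispose of the equivalence between \refeq{eq.feasibility} and \refeq{eq.nnls}, which the text already sketches: under Assumption~\ref{assumption.range} the linear variety $\vd$ intersects $\kd$, so the minimum of $\|\signal{A}\signal{\rho}-\signal{r}\|_2^2$ on $\kd$ equals $0$, and the minimizers are exactly the vectors in $\vd\cap\kd$. Hence the solution sets of \refeq{eq.feasibility} and \refeq{eq.nnls} are both equal to $\vd\cap\kd$.

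Next, for \refeq{eq.l1}, the feasible region is $\vd\cap\kd$ by construction, so it suffices to show that the $\ell_1$ objective is constant on this set. I would argue as follows. By Assumption~\ref{a.one}, there is an index $k\in\{1,\ldots,2N^2\}$ such that the $k$-th row of $\signal{A}$ is $c\signal{1}^t$ with $c=\mu(\Omega)/D>0$. Therefore, every $\signal{\rho}\in\vd$ satisfies
\begin{equation*}
c\,\signal{1}^t\signal{\rho} = r_k,
\end{equation*}
so $\signal{1}^t\signal{\rho} = r_k/c$ is the same constant for all $\signal{\rho}\in\vd$. On $\kd$ we have $\signal{\rho}\ge \signal{0}$ componentwise, hence $\|\signal{\rho}\|_1=\signal{1}^t\signal{\rho}$. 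Combining the two observations, for every $\signal{\rho}\in\vd\cap\kd$,
\begin{equation*}
\|\signal{\rho}\|_1 = \signal{1}^t\signal{\rho} = r_k/c,
\end{equation*}
which is independent of $\signal{\rho}$. Consequently every feasible point of \refeq{eq.l1} is a minimizer, so the solution set of \refeq{eq.l1} equals $\vd\cap\kd$, and coincides with those of \refeq{eq.feasibility} and \refeq{eq.nnls}.

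I do not anticipate a serious obstacle; the only subtlety is making sure that the constant $c$ in Assumption~\ref{a.one} is used consistently with the definition of $\signal{g}_{\mathrm{d},n}$ in the discretization (which already carries the factor $\mu(\Omega)/D$), so that the row $c\signal{1}^t$ indeed appears in $\signal{A}$ rather than some rescaled version. Once that bookkeeping is done, the argument reduces to the elementary observation that a nonnegativity constraint turns the $\ell_1$ norm into a linear functional already pinned down by the measurement equations.
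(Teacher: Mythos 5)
Your proposal is correct and follows essentially the same route as the paper's proof: establish that the solution sets of \refeq{eq.feasibility} and \refeq{eq.nnls} are both $\vd\cap\kd$ under Assumption~\ref{assumption.range}, then use the all-ones row $c\signal{1}^t$ of $\signal{A}$ guaranteed by Assumption~\ref{a.one} together with nonnegativity to show $\|\signal{\rho}\|_1 = r_k/c$ is constant on $\vd\cap\kd$, so every feasible point of \refeq{eq.l1} is a minimizer. No gaps.
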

\begin{proof}
	If Assumption~\ref{assumption.range} holds, then $\vd\cap\kd\neq\emptyset$. Now, let $\signal{\rho}\in \vd\cap\kd$ be arbitrary. Assumption~\ref{a.one} implies that, for $c:=\mu(\Omega)/D$, there exists $k\in\{1,\ldots,2N^2\}$ such that $r_k\overset{(\text{a})}{=}  c\signal{1}^t\signal{\rho}\overset{(\text{b})}{=}c~\|\signal{\rho}\|_1$, where (a) follows from $\signal{\rho}\in \vd$ and (b) follows from $\signal{\rho}\in\kd$. Since $\signal{\rho}$ is arbitrary, we conclude that all vectors in $\vd\cap\kd$ have the same $l_1$ norm, which implies that Problems \refeq{eq.feasibility} and \refeq{eq.l1} have the same set of solutions. The equivalence between Problems \refeq{eq.feasibility} and \refeq{eq.nnls} has already been established, so the proof is complete.
\end{proof}

The practical implication of Proposition~\ref{prop.l1} is that Problems~\refeq{eq.feasibility} and \refeq{eq.nnls} are expected to promote sparsity implicitly, but the estimand $\rd$ is not necessarily the sparsest vector of the solution set. Therefore, we need additional information in the problem formulations to improve the estimates,  and in the next section we incorporate statistical information gained from datasets. 

\section{Proposed algorithms}
\label{sect.proposed}
Given \emph{a positive definite matrix} $\signal{M}\in\real^{D\times D}$ [this matrix is fixed later in \refeq{eq.mat_m}], let $(\hmr, \innerprod{\cdot}{\cdot}_{\hmr})$ denote the Hilbert space consisting of the vector space $\hmr:=\real^D$ equipped with the inner product $(\forall \signal{x}\in\hmr)(\forall \signal{y}\in\hmr) \innerprod{\signal{x}}{\signal{y}} = \signal{x}^t\signal{My}$. By definition, the vector space $\hmr=\real^D$ does not depend on $\signal{M}$, but the notation $\mathcal{H}_{\signal{M}}$ is useful to clarify the inner product defined on $\real^D$.

Now, assume that a dataset $\mathcal{M}=\{\signal{\rho}_{\mathrm{d},1},\ldots, \signal{\rho}_{\mathrm{d},L}\}$ with $L$ samples of angular power spectra is available, and suppose that these samples have been independently drawn from the same distribution with mean $\hsignal{\rho}\in\kd$ and covariance matrix $\signal{C}\in\real^{D\times D}$. In practice, $\hsignal{\rho}$ and $\signal{C}$ can be estimated from a sample average as follows (assuming $L\gg 1$):
\begin{align}
\label{eq.rho_mean}
\hsignal{\rho} \approx \dfrac{1}{L} \sum_{n=1}^L  \signal{\rho}_{\mathrm{d},n} \in\real^{D\times D}
\end{align}
and
\begin{align}
\label{eq.C}
\signal{C} \approx \dfrac{1}{L-1} \sum_{n=1}^L  (\signal{\rho}_{\mathrm{d},n}-\hsignal{\rho})(\signal{\rho}_{\mathrm{d},n}-\hsignal{\rho})^t.
\end{align}

Hereafter, to exploit knowledge gained from $\signal{C}$ and $\hsignal{\rho}$, we use the Hilbert space $\left(\hmr,\innerprod{\cdot}{\cdot}_{\hmr}\right)$ defined above by fixing $\signal{M}$ to  
\begin{align} 
\label{eq.mat_m}
\signal{M}_\alpha:=(\signal{C}+\alpha \signal{I})^{-1},
\end{align}
where $\alpha>0$ is a design parameter that serves two purposes: (i) it guarantees positive definiteness of $\signal{M}_\alpha$, and (ii) it provides robustness against environmental changes, as discussed below. An important feature of the Hilbert space $(\hmra,\innerprod{\cdot}{\cdot}_{\hmra})$ is that its induced norm $(\forall \signal{x}\in \hmra) ~ \|\signal{x}\|_{\hmra} := \sqrt{\innerprod{\signal{x}}{\signal{x}}_{\hmra}}$ in turn induces the Hilbertian metric $(\forall \signal{x}\in \hmra)(\forall \signal{y}\in \hmra) ~ d_{\hmra}(\signal{x},\signal{y}):=\|\signal{x}-\signal{y}\|_{\hmra}$ that is known as the Mahalanobis distance in statistical pattern recognition \cite{gjm2004}. In particular, if the design parameter $\alpha>0$ is sufficiently small, the distance $d_{\hmra}(\signal{x},\hsignal{\rho})$ between the distribution mean $\hsignal{\rho}$ and a given vector  $\signal{x}\in\hmra$ is known to provide us with a notion of distance between $\signal{x}$ and the distribution of the dataset $\mathcal{M}$.  As the parameter $\alpha$ increases, the influence of the dataset in the metric $d_{\hmra}$ decreases ($d_{\hmra}$ becomes increasingly similar to a scaled version of the standard Euclidean metric), so large $\alpha$ can be useful in scenarios in which the distribution of the angular power spectrum changes significantly over time and acquisition of datasets is difficult. We now propose two algorithms based on the Hilbert space $\left(\hmra,\innerprod{\cdot}{\cdot}_{\hmra}\right)$.

\subsection{Algorithm 1}
In Sect.~\ref{sect.relations} we have shown that Problems \refeq{eq.feasibility}, \refeq{eq.nnls}, and \refeq{eq.l1} do not have a unique solution in general, and they are equivalent if the assumptions in Proposition~\ref{prop.l1} hold. Therefore, among all  solutions, we propose to select the solution with minimum distance to the distribution of the dataset in the sense defined above; i.e.,  we minimize the Mahalanobis distance. Formally, given $\alpha>0$, we solve the following hierarchical problem:
\begin{align}
\label{eq.generalization}
\text{Minimize}_{{\signal{\rho}}\in S} \|{\signal{\rho}}-\hsignal{\rho}\|_{\hmra},
\end{align}
where
\begin{align}
\label{eq.set_s}
S:=\argmin_{\signal{\rho}\in\hmra} g(\signal{\rho}) \subset \hmra
\end{align}
and 
\begin{align}
\label{eq.func_g}
\Gamma_0(\hmra)\ni g:\hmra\to\real_+:\signal{\rho}\mapsto \|\signal{A\rho}-\signal{r}\|_{2}^2+\iota_{\kd}(\signal{\rho}).
\end{align}
Note that $S$ is the set of solutions to Problem~\refeq{eq.nnls}, and, if the assumptions in Proposition~\ref{prop.l1} hold, then $S$ is also the set of solutions to Problems~\refeq{eq.feasibility} and \refeq{eq.l1}. However, hereafter we do not necessarily assume that the assumptions in Proposition~\ref{prop.l1} hold. In particular, as discussed below, the proposed algorithm can deal with the case $\kd\cap\vd=\emptyset$ without any changes.

One of the challenges for solving \refeq{eq.generalization} is that hierarchical problems are not in general canonical convex programs as defined in some well-known references \cite{boyd}, where constraints have to be expressed as level sets of convex functions or as equalities involving affine functions. Therefore, the solvers described in these references are not directly applicable. The proposed strategy for solving \refeq{eq.generalization} is to interpret its solution as the projection from $\hsignal{\rho}$ onto the fixed point set of a computable firmly nonexpansive mapping, which enables us to apply best approximation techniques such as those based on Haugazeau's algorithm \cite[Theorem~30.8]{baus17}. 

In more detail, recalling the definition of projections, we verify that the solution $\signal{\rho}^\star$ to \refeq{eq.generalization} is the  projection from $\hsignal{\rho}$ onto the closed convex set $S$ in the Hilbert space $(\hmra, \innerprod{\cdot}{\cdot}_{\hmra})$; i.e.,  $\signal{\rho}^\star=P_S(\hsignal{\rho})$. As a result, the solution exists and is unique provided that the set $S$ is nonempty, and we can show nonemptiness of this set even if we weaken the assumptions in Proposition~\ref{prop.l1}. For example, let us only assume that one of the vectors $(\signal{g}_{\mathrm{d},n})_{n\in\Natural}$ has (strictly) positive components (see Assumption~\ref{a.one} and Remark~\ref{remark.ula}). In this case, we can show that $g$ is coercive, but we omit the details for brevity. Therefore, we have $S\neq \emptyset$ as an implication of \cite[Proposition~11.15]{baus17}. 

The projection onto $S$ does not have a closed-form expression in general, but it can be computed with iterative methods. To this end, note that the set $S$ can be equivalently expressed as the fixed point set of the mapping $\mathrm{prox}_{\gamma g}:\hmra\to\hmra$ for every $\gamma>0$; i.e., $(\forall \gamma>0)~\mathrm{Fix}(\mathrm{prox}_{\gamma g}) = S$. Therefore, given an arbitrary scalar $\gamma>0$, the desired solution $\signal{\rho}^\star = P_S(\hsignal{\rho}) = P_{\mathrm{Fix}(\mathrm{prox}_{\gamma g})}(\hsignal{\rho})$ is the limit of the sequence  $(\signal{\rho}_n)_{n\in\Natural}$ constructed with the following instance of Haugazeau's algorithm:
\begin{align}
\label{eq.hag}
\signal{\rho}_{n+1} = Q(\signal{\rho}_1, \signal{\rho}_n, \mathrm{prox}_{\gamma g}(\signal{\rho}_n)),
\end{align}
where $\signal{\rho}_1:=\hsignal{\rho}$, 
\begin{align*}
\begin{array}{l}
Q:\hmra \times \hmra \times \hmra \to \real \\
(\signal{x},\signal{y},\signal{z})\mapsto  \begin{cases}
\signal{z},\text{ if }\delta = 0 \text{ and } \chi\ge 0; \\
\signal{x}+\left(1+\dfrac{\chi}{\nu}\right)(\signal{z}-\signal{y}), \\ \qquad \text{ if }\delta > 0 \text{ and } \chi\nu \ge \delta;\\
\signal{y}+\dfrac{\nu}{\delta}\left(\chi(\signal{x}-\signal{y})+\mu(\signal{z}-\signal{y})\right),\\ \qquad \text{ if }\delta > 0 \text{ and } \chi\nu < \delta;
\end{cases}
\end{array}
\end{align*} 
$\chi = \innerprod{\signal{x}-\signal{y}}{\signal{y}-\signal{z}}_{\hmra}$, $\mu=\|\signal{x}-\signal{y}\|^2_{\hmra}$, $\nu = \|\signal{y}-\signal{z}\|^2_{\hmra}$, and $\delta=\mu\nu-\chi^2$.

The proof that the sequence $(\signal{\rho}_n)_{n\in\Natural}$ constructed via \refeq{eq.hag} indeed converges to $P_S(\hsignal{\rho})$ is a simple application of \cite[Theorem~30.8]{baus17}. More precisely, recall that proximal mappings are firmly nonexpansive, so the mapping $\signal{x}\mapsto \signal{x}-\mathrm{prox}_{\gamma g}(\signal{x})$ is demiclosed everywhere \cite[Theorem~4.27]{baus17}. Therefore, we fulfill all the conditions in \cite[Theorem~30.8]{baus17} for the sequence constructed via \refeq{eq.hag} to converge to $P_{\mathrm{Fix}(\mathrm{prox}_{\gamma g})}(\hsignal{\rho})=P_{S}(\hsignal{\rho})$. 

\begin{remark} \label{remark.proxg} {\it (Computation of the proximal mapping of $g$)}
	Using the definition of proximal mappings, after simple algebraic manipulations, we verify that $\mathrm{prox}_{\gamma g}(\signal{x})$ in the Hilbert space $(\hmra, \innerprod{\cdot}{\cdot}_{\hmra})$ for given $\signal{x}\in\hmra$ and $\gamma>0$ is the solution to
	\begin{align}
	\label{eq.problem_prox}
	\mathrm{Minimize}_{\signal{y}\in \kd}  \|\signal{Q}^{1/2}\signal{y}-\signal{b}\|_2^2,
	\end{align}	 
	where $\signal{Q}^{1/2}$ is the principal square root of $\signal{Q}:=\signal{A}^t\signal{A}+(1/(2\gamma))\signal{M}_\alpha,$ and $\signal{b}:=\signal{Q}^{-1/2}(\signal{A}^t\signal{r}+(1/(2\gamma))\signal{M}_\alpha\signal{x}).$ 
	Problem \refeq{eq.problem_prox} is a standard nonnegative least-squares program, so the proximal mapping $\mathrm{prox}_{\gamma g}:\hmra\to\hmra$ can be computed with solvers that terminate with a finite number of steps, such as those based on the active-set method \cite{law1995}.
\end{remark}

\subsection{Algorithm~2}

To derive a low-complexity alternative to Algorithm~1, we modify Problem~\refeq{eq.nnls} by adding a regularizer based on the Mahalanobis distance as follows:
\begin{align}
\label{eq.alg2}
\text{Minimize}_{\signal{\rho}\in\hmra}  \|{\signal{\rho}}-\hsignal{\rho}\|_{\hmra}^2+ \mu \|\signal{A}\signal{\rho}-\signal{r}\|_{2}^2 + \iota_{\kd}(\signal{\rho}),
\end{align}
where $\mu>0$ is a design parameter that trades deviations from the set $\vd$ against the distance to the distribution of the dataset, and $\alpha>0$ is the design parameter of the Hilbert space $(\hmra, \innerprod{\cdot}{\cdot}_{\hmra})$. The definition of proximal mappings shows that the unique solution $\signal{\rho}^\star$ to Problem~\refeq{eq.alg2} is $\signal{\rho}^\star=\prox{(\mu/2)g}(\hsignal{\rho})$, where $g$ is the function defined in \refeq{eq.func_g}. As a result, in light of Remark~\ref{remark.proxg}, Problem~\refeq{eq.alg2} can be solved with a single application of the active-set method \cite{law1995}, unlike the algorithm in \refeq{eq.hag}, which uses a nonnegative least squares solver to compute the proximal mapping of $\gamma g$ at each iteration. The price we pay for this reduction in computational effort is that the formulation in \refeq{eq.alg2} requires knowledge of a good value for $\mu$ because the solution to \refeq{eq.alg2} depends on this parameter. In contrast, the parameter $\gamma$ in \refeq{eq.hag} determines the path taken by the iterates, but not the vector to which the algorithm converges. 

\begin{remark}
 Additional regularizers, such as those based on total variation techniques could also be added to \refeq{eq.alg2}, but we do not consider them here because of the space limitation. 
\end{remark}

\section{Simulations and conclusions}
We assume that a base station is equipped with an uniform linear array operating with $N=16$ antennas, frequency $f=2.11$ GHz, speed of wave propagation $c=3\cdot 10^8$ m/s, antenna spacing $d=c/(2f)$, and the array response shown in \cite[Example 1]{renato2020}.  The samples of angular power spectra use a conventional model in the literature \cite{miretti18,renato2020}. More precisely, each run of the simulation constructs an angular power spectrum via  $\rho:\Omega\to\real_+:\theta\mapsto\sum_{k=1}^Q \alpha_k h_k(\theta)$, where $\Omega:=[-\pi/2,~\pi/2]$, $Q$ is uniformly drawn from $\{1,2,3,4,5\}$; $h_k:\Omega\to\real_+:\theta\mapsto ({1}/{\sqrt{2\pi\Delta_k^2}})\exp\left({-{(\theta-\phi_k)^2}/{(2\Delta_k^2)}}\right)$;
$\phi_k$, the main arriving angle of the $k$th path, is uniformly drawn from $[0, \pi/2]$; and $\alpha_k$ is uniformly drawn from $[0,~1]$, and it is further normalized to satisfy $\sum_{k=1}^Q\alpha_k=1$. The discrete grid to approximate angular power spectra has $D=180$ uniformly spaced points. Estimates of channel covariance matrices are produced via $P_\mathcal{T}(\sum_{i=k}^{500} \signal{h}[k] \signal{h}[k]^T - \sigma^2 \signal{I} )$, where $\sigma^2=0.1$ is the noise variance in \refeq{eq.contamination}, and $P_\mathcal{T}:\complex^{N\times N}\to \mathcal{T}$ denotes the projection onto the set $\mathcal{T}\subset \complex^{N\times N}$ of Toeplitz matrices with respect to the complex Hilbert space $(\complex^{N\times N}, \innerprod{\signal{A}}{\signal{B}}=\signal{B}^H\signal{A})$. For the construction of the operator $T$ in \refeq{eq.lin_op}, we use only $2N-1$ functions because channel covariance matrices of uniform linear arrays are Toeplitz. 

The approximations in \refeq{eq.rho_mean} and \refeq{eq.C} use 1,000 samples of angular power spectra, and the parameter $\alpha$ to construct the matrix $\signal{M}_\alpha$ in \refeq{eq.mat_m} is set to $\alpha = \|\signal{C}\|_2/100$, where $\|\signal{C}\|_2$ denotes the spectral norm of the empirical covariance matrix $\signal{C}$ in \refeq{eq.C}. Subsequently, we normalize the matrix $\signal{M}_\alpha$ to satisfy $\|\signal{M}_\alpha\|_2=1$. With an abuse of notation, we use the normalized mean square error (MSE) $E\left[{\|{\signal{\rho}} - \rd\|_2^2}/{\|\rd\|_2^2}\right]$ as the figure of merit to compare different algorithms, where $\signal{\rho}$ is the estimate of  $\signal{\rho}_\mathrm{d}$, and expectations are approximated with the empirical average of 200 runs of the simulation. 

Fig.~\ref{fig.results1} shows the performance of the following algorithms: (i) the extrapolated and accelerated projection method (EAPM) used in \cite{miretti18,miretti18SPAWC} operating in the standard Euclidean space $\real^D$ with inner product $(\forall \signal{x}\in\real^D)(\forall\signal{y}\in\real^D) \innerprod{\signal{x}}{\signal{y}}:=\signal{x}^t\signal{y}$; (ii) Haugazeau's algorithm in \refeq{eq.hag} with $\gamma=5$; and (iii) the solution to the nonnegative least squares (NNLS) problem in \refeq{eq.alg2}, computed with SciPy NNLS solver, with $\mu=5\cdot 10^4$ (NNLS-1) and $\mu=1$ (NNLS-2). Note that the algorithms NNLS-1 and NNLS-2 are not considered iterative methods because we assume that solvers for NNLS programs are available as a computational tool. Therefore, we use the convention that the estimates produced by these algorithms are the same at every iteration. 

We have also simulated a neural network similar to that in \cite[Fig.~5]{song2019} with two modifications. First, the number of neurons in each layer was scaled by 180/128 to account for the finer grid used in this study. Second, the last layer based on the soft-max activation function was replaced by the  rectified linear unit activation function because the desired estimand is nonnegative and the soft-max function is inappropriate for the figure of merit considered above (with the softmax activation function, simply scaling the input deteriorates the performance severely if no additional heuristics are  employed). By carefully training this neural network with different solvers, step sizes, epochs, batch sizes, and with a training set containing 110,200 samples (which is two orders of magnitude larger than the dataset used by the proposed algorithms), we have not obtained a normalized MSE better than $6\cdot 10^{-2}$, which is worse than the MSE obtained with the existing EAPM algorithm in Fig.~\ref{fig.results1}. Furthermore, with the scenario considered later in Figs.~\ref{fig.results2} and \ref{fig.results3} (which uses training and test sets constructed with different distributions), the MSE increases drastically (MSE $> 2$). For these reasons, we do not show the performance of the neural network in the figures.

Some conclusions for this first experiment are as follows:

- The proposed algorithms can outperform the EAPM algorithm used in \cite{miretti18,miretti18SPAWC} because statistical information obtained from a dataset is exploited, and we note that the EAPM algorithm has already been shown to outperform existing data driven methods that can cope with small datasets \cite{miretti18}.

- The performance gap between NNLS-1 and NNLS-2 shows that the solution to Problem~\refeq{eq.alg2} is sensitive to the choice of the regularization parameter $\mu$. Nevertheless, if a good value is known, which can be obtained with cross-validation techniques, then the solution to Problem~\refeq{eq.alg2} has performance similar to that obtained with Haugazeau's method.

\begin{figure}
	\begin{center}
		\includegraphics[width=.9\columnwidth]{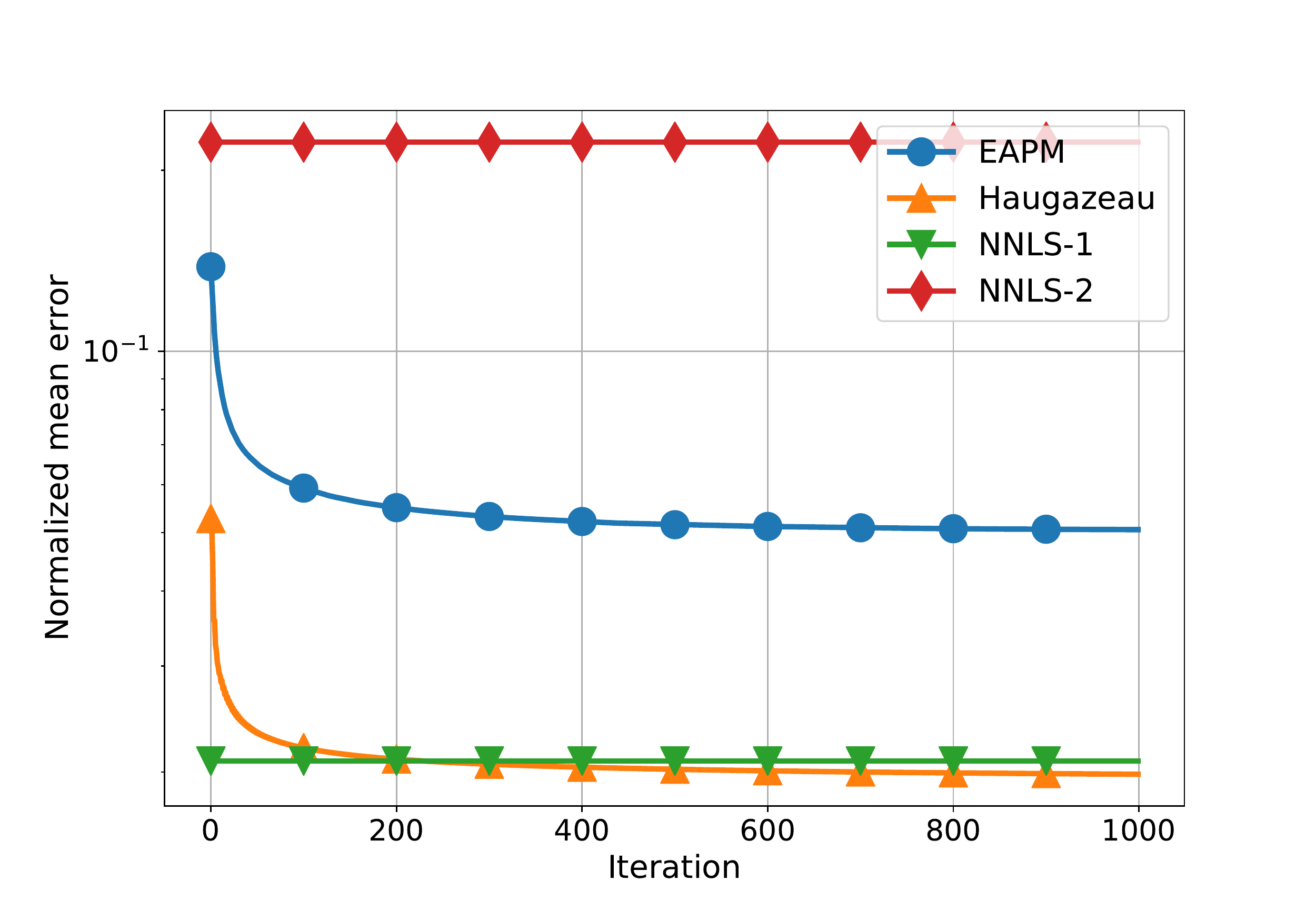}
		\caption{Normalized mean square error as a function of the number of iterations.} 
		\label{fig.results1}
	\end{center}

\end{figure}

A well-known limitation of data-driven methods (and, in particular, neural networks, as discussed above) is the poor generalization performance if the estimand is sampled from a distribution different from that used to construct training sets. As we now show, the proposed hybrid data and model driven algorithms can mitigate problems of this type. 

In Fig.~\ref{fig.results2}, we use the proposed algorithms with the dataset in Fig.~\ref{fig.results1} to reconstruct angular power spectra with the main angles of the paths drawn uniformly at random within the interval $[-\pi/2, 0]$. By doing so, we mimic an extreme scenario where the principal subspaces obtained from the dataset contain almost no energy of the angular power spectra being estimated. As seen in Fig.~\ref{fig.results2}, the performance of the Haugazeau and NNLS-1 hybrid methods  deteriorates, but the MSE does not increase to a point to render these algorithms ineffective. The reason is that the estimates produced by these two proposed algorithms are consistent with the measurements and the array model (i.e., they are close to the set $\vd$), and this fact alone may be enough to provide performance guarantees in some applications, as proved in \cite{renato18error,hag2018multi}. Furthermore, the proposed algorithms have a tunable parameter to make them robust against changes in the distribution of the angular power spectrum; namely, the parameter $\alpha$ in \refeq{eq.mat_m}. This fact is illustrated in Fig.~\ref{fig.results3}, where we show the performance of the algorithms with the parameter $\alpha$  increased to $\alpha=\|\signal{C}\|_2$ (the remaining simulation parameters are the same as those used to produce Fig.~\ref{fig.results2}). The performance of the Haugazeau and NNLS-1 algorithms in Fig.~\ref{fig.results3} approaches the performance of the pure model-based EAPM because, by increasing $\alpha$, the proposed algorithms increasingly ignore the erroneous information about the distribution of the estimand, which is inferred from the dataset.

\begin{figure}
	\begin{center}
		\includegraphics[width=.9\columnwidth]{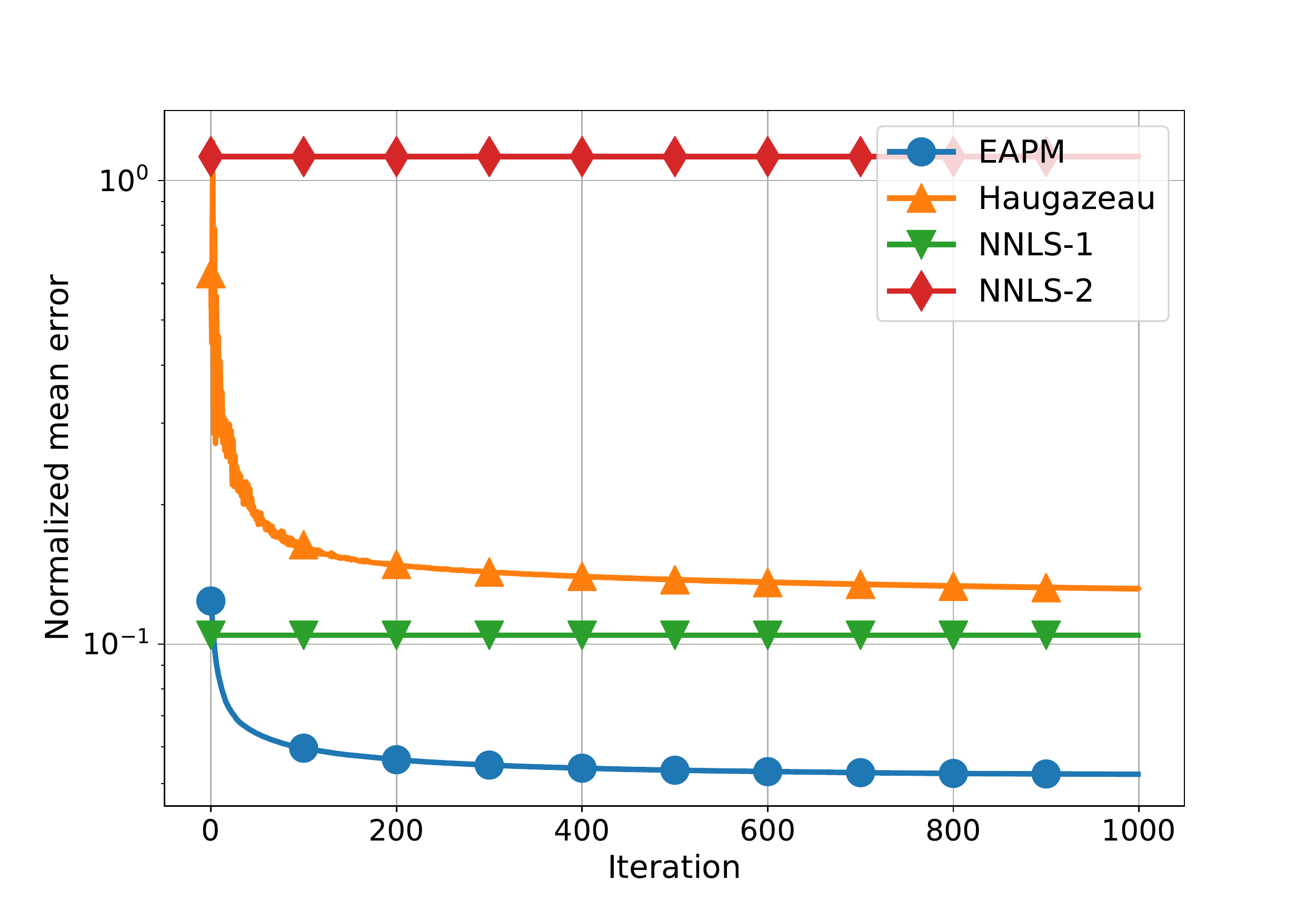}
		\caption{Normalized mean square error as a function of the number of iterations. Angular power spectra of the dataset drawn from a distribution different from that of the estimand ($\alpha=\|\signal{C}\|_2/100$).}
		\label{fig.results2}
	\end{center}
	
\end{figure}

\begin{figure}
	\begin{center}
		\includegraphics[width=.9\columnwidth]{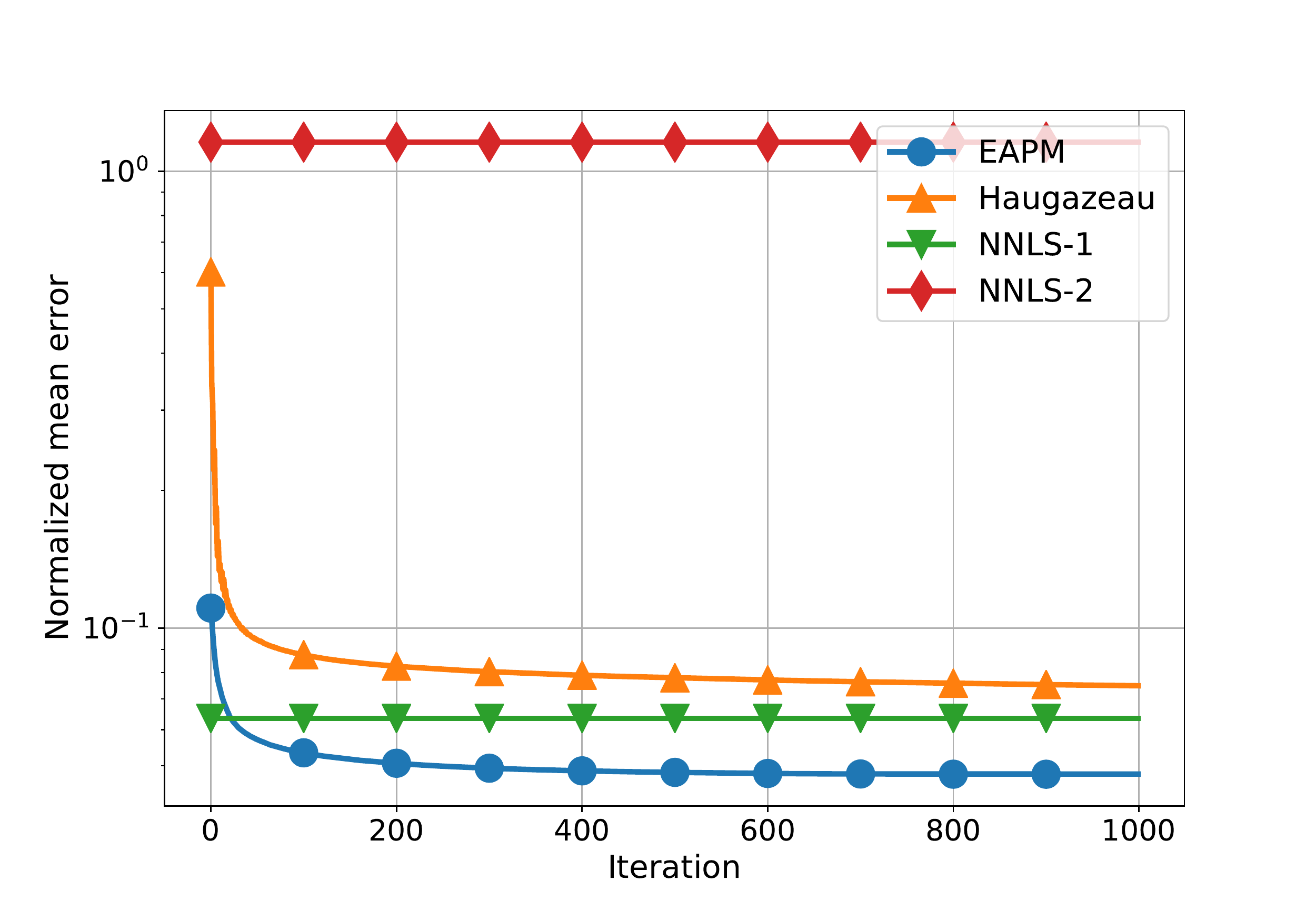}
		\caption{Normalized mean square error as a function of the number of iterations. Angular power spectra of the dataset drawn from a distribution different from that of the estimand ($\alpha=\|\signal{C}\|_2$).}
		\label{fig.results3}
	\end{center}
	
\end{figure}

\bibliographystyle{IEEEtran}
\bibliography{IEEEabrv,references_renato}

\end{document}